\def\N{\mathbb{N}}
\def\e{\varepsilon}
\def\d{\delta}
\def\1{\mathds{1}}
\def\M{\mathcal{M}}
\def\P{\mathcal{P}}
\def\Q{\mathcal{Q}}
\def\B{\mathcal{B}}
\def\MA{M(X,\Sigma)}
\def\M1{M_1(X,\Sigma)}
\def\V{W}
\def\m{\mathfrak{m}}
\def\Ha{\mathop{H_\alpha}}
\def\ha{\mathop{h_\alpha}}
\def\H{\mathop{H}}
\def\h{\mathop{h}}
\def\hva{\mathop{h_{\alpha}^W}}
\def\Hva{\mathop{H_{\alpha}^W}}
\def\hv{\mathop{h^W}}
\def\ldim{\underline{\mathrm{dim}}}
\def\udim{\overline{\mathrm{dim}}}
\def\dim{\mathrm{dim}}
\def\for{\mbox{  for }}
\newtheorem{observation}{Observation}[section]
\newtheorem{theo}{Theorem}[section]
\newtheorem{prop}{Proposition}[section]
\newtheorem{cor}{Corollary}[section]
\theoremstyle{definition}
\newtheorem{definition}{Definition}[section]
\newtheorem{example}{Example}[section]
\newtheorem{problem}{Problem}[section]
\title{Weighted Approach to R\'enyi Entropy}
\author{Marek \'{S}mieja \hspace{2cm} Jacek Tabor}
\date{Preprint}
\begin{document}

\maketitle

\begin{abstract}
R\'enyi entropy of order $\alpha$ is a general measure of entropy. In this paper  we derive estimations for the R\'enyi entropy of the mixture of sources in terms of the entropy of the single sources. These relations allow to compute the R\'enyi entropy dimension of arbitrary order of a mixture of measures.

The key for obtaining these results is our new definition of the weighted R\'enyi entropy. It is shown that weighted entropy is equal to the classical R\'enyi entropy.
\end{abstract}

\section{Introduction.}
The investigation of data compression, coding and behaviour of statistical and physical systems involves the notion of the entropy \cite{El, Go, Se, Wu}. In information theory it is a basic term which is interpreted as a limit of best possible lossless compression of any communication. The reader interested in the history of information theory and data compression is referred to \cite{Be, Ve}.

R\'enyi entropy of order $\alpha$ \cite{Re2, Re1} is a kind of measure of entropy which extends the notion of classical Shannon entropy \cite{Sh}. The advantage of the R\'enyi entropy over the Shannon entropy lies in its generality. When modifying the parameter $\alpha$ it is possible to emphasise or weaken the relevance of some probability events \cite{Ca}. Moreover, contrary to Shannon entropy, there exists efficient methods for computing R\'enyi entropy for some values of parameter $\alpha$ \cite{Gr}. 

The aim of this paper is to adapt the idea and results obtained recently for Shannon entropy \cite{Sm} on the field of R\'enyi entropy of order $\alpha \in (0,\infty) \setminus \{1\}$. The paper \cite{Sm} provides the weighted approach to the Shannon entropy which is based on measures instead of partitions. Weighted entropy describes the idea of random lossy coding which, from a practical point of view, can be more important than the classical deterministic coding.\footnote{Roughly speaking, random coding allows to code a given element once for one symbol and another time for other one. It is not necessary then to control precisely the way of coding.} It is shown that this approach is equivalent to the classical one \cite[Theorem II.1]{Sm}. It occurs that this alternative definition allows to derive some property concerning the Shannon entropy relatively easily. In particular, we obtain the estimation of Shannon entropy of the mixture of sources \cite[Theorem III.1]{Sm} and entropy dimension of convex combination of measures \cite[Theorem IV.1]{Sm}.

In our paper we show that the weighted entropy can be similarly defined for the R\'enyi entropy of any order $\alpha \in (0,\infty) \setminus\{1\}$ (precise definitions will be given in Section \ref{definicje}). As a main result we prove the equivalence between the definition of weighted and classical R\'enyi entropy (see Theorem \ref{wnWaz}). Consequently, we derive an optimal estimation of the R\'enyi entropy and R\'enyi entropy dimension of order $\alpha$ of the mixture of sources (see Theorem \ref{corEnt} and \ref{corDim}).

\section{Basic definitions and their interpretations.}

In this paper, if not stated otherwise, we always assume that $(X, \Sigma, \mu)$ is a probability space.

\subsection{Weighted Shannon entropy.}

To understand weighted approach to entropy, let us recall the basic concepts, definitions and facts from \cite{Sm} where this idea is introduced for Shannon entropy. We use some of these definitions in the present paper. 

The problem of lossy data compression involves transformation between possibly uncountable set of data and some countable coding set. This coding mapping is usually obtained by defining a partition of a data set $X$. We say that a family $\P \subset \Sigma$ is a $\mu$-partition of $X$ if $\P$ is countable family of disjoint sets and
$$
\mu(X \setminus \bigcup_{P \in \P}P) = 0.
$$
Once the partition is chosen, the coding is defined precisely: we code a given $x \in X$ by $P \in \P$ iff $x \in P$. Thus the partition is also called the coding alphabet. 

To obtain a statistical amount of memory per one element used in the lossy coding generated by partition $\P$ we use the Shannon $\mu$-entropy of $\P$ defined by
$$
\h(\mu;\P):= - \sum_{P \in \P} \mu(P) \log_2(\mu(P)).
$$

The coding with use of a given partition causes specific level of error. To control the maximal error we make in the lossy coding we choose the error-control family $\Q$ which is simply the family of measurable subsets of $X$. We consider only such partitions $\P$ which are finner than $\Q$ i.e. we desire that for every $P \in \P$ there exists $Q \in \Q$ such that $P \subset Q$. If this is the case then we say that $\P$ is $\Q$-acceptable and we write $\P \prec \Q$. 

Consequently, to describe the best lossy coding determined by $\Q$-acceptable alphabets we define Shannon $\mu$-entropy of $\Q$ by:
$$
\H(\mu;\Q):=\inf\{\h(\mu;\P) \in [0,\infty] : \mbox{$\P$ is a $\mu$-partition and $\P \prec \Q$}\}.
$$
Similar notions of the entropy in special cases were used by A. R\'enyi \cite{Re2, Re1} and E. C. Posner \cite{Po2, Po3, Po1}. They rather considered error-control families in metric spaces consisted of balls with given radius or cubes with specific edge length.

The inspiration of weighted entropy, lies in the substitution of the division of space $X$ into partition by the division of measure $\mu$ into ``submeasures''\footnote{The idea of weighted entropy is indebted to the notion of weighted Hausdorff measures considered by J. Howroyd \cite{Ho1, Ho2}.}. It enables to use of the operations on functions rather than on plain sets. Roughly speaking, this approach provides the computation and interpretation of the entropy with respect to ``formal'' convex combination $a_1\P_1+a_2\P_2$, where $\P_1,\P_2$ are partitions (which clearly does not make sense in the classical approach).

Let us denote the division of measure $\mu$ with respect to $\Q \subset \Sigma$ by:
\begin{equation} \label{ZbiorV}
\begin{array}{ll}
\V(\mu;\Q):= & \{\m:\Q \ni Q \rightarrow \m_Q \in \MA : \\[0.4ex]
& \m_Q(X \setminus Q) = 0 \text{ for every $Q \in \Q$ and } \sum_{Q \in \Q} \m_Q = \mu \},
\end{array}
\end{equation}
where $\MA$ is the family of all measures on $(X, \Sigma)$. Observe that every function $\m \in \V(\mu;\Q)$ is non-zero on at most countable sets of $\Q$. Then we define the weighted $\mu$-entropy of a given $\m \in \V(\mu;\Q)$ by:
$$
\hv(\mu; \m):= - \sum_{Q \in \Q} \m_Q(X) \log_2(\m_Q(X)) \text{.}
$$
The sum in the above formula is taken over $Q \in \Q$ such that $\m_Q(X) > 0$. The weighted entropy also can be generalised for arbitrary family $\Q \subset \Sigma$, taking the infimum over all functions of $\V(\mu; \Q)$.

It is shown \cite[Theorem II.1]{Sm} that the weighted $\mu$-entropy of $\Q$ is equal to the classical one i.e. $\H(\mu;\Q)$. This equality allows to compute the entropy of the mixture of sources. Let us recall the motivation of this problem:

\begin{problem} \label{probOfMix}
Suppose that we are given two sources $S_1$, $S_2$, which are represented by probability measures $\mu_1$, $\mu_2$ respectively. We assume that the error-control family $\Q$ defines the precision in the lossy-coding elements of $X$. Let us consider a new source $S$ which sends a signal produced by $S_1$ with probability $a_1$ and produced by $S_2$ with probability $a_2$. We are interested in estimation of $\H(a_1\mu_1 + a_2\mu_2;\Q)$ (mixture of $S_1$ and $S_2$) in terms of $\H(\mu_1;\Q)$ and $\H(\mu_2;\Q)$.
\end{problem} 

The following theorem gives the exact solution how to estimate the entropy of the mixture:

\medskip
\begin{flushleft} Shannon entropy of the mixture\cite[Theorem III.1]{Sm}: \emph{Let $a_1, a_2 \in [0,1]$ be such that $a_1 + a_2 = 1$. If $\mu_1, \mu_2$ are probability measures and $\Q \subset \Sigma$ then:}
$$
\H(a_1\mu_1+a_2\mu_2;\Q) \geq a_1 \H(\mu_1;\Q)+a_2\H(\mu_2;\Q)  
$$
\emph{and}
$$
\H(a_1\mu_1+a_2\mu_2;\Q) \leq a_1 \H(\mu_1;\Q)+a_2\H(\mu_2;\Q) - a_1 \log_2(a_1) - a_2 \log_2(a_2).
$$
\medskip
\end{flushleft}

\subsection{Weighted approach to the R\'enyi entropy of order $\alpha$.} \label{definicje}

In further parts of the paper we investigate the weighted approach to the R\'enyi entropy of order $\alpha$. In this subsection we define precisely the weighted R\'enyi entropy. We use the idea from \cite{Sm} described in the previous subsection.

Before that, we recall the classical definition of the R\'enyi entropy of order $\alpha$ for $\mu$-partition and its generalisation for family of measurable subsets of $X$.

\begin{definition} \label{entDef1} 
Let $\alpha \in (0,\infty)\setminus\{1\}$. Given a $\mu$-partition $\P \subset \Sigma$ of $X$, \emph{R\'enyi $\mu$-entropy of order $\alpha$ of $\P$} \cite{Re1} is defined by
$$
{\ha}(\mu;\P):=\frac{1}{1-\alpha} \log_2[\sum_{P \in \P} \mu(P)^\alpha].
$$
For $\Q \subset \Sigma$ we define \emph{R\'enyi $\mu$-entropy of order $\alpha$ of $\Q$} by
$$
{\Ha}(\mu;\Q):=
\inf\{{\ha}(\mu;\P) \in [0,\infty] \, : \, 
\mbox{$\P$ is a $\mu$-partition and $\P \prec \Q$}\}.
$$
\end{definition}

Observe that if there is no $\mu$-partition finer than $\Q$ then
directly from the definition\footnote{We put $\inf(\emptyset) = \infty$.} ${\Ha}(\mu;\Q)=\infty$. Moreover, if $\Q$ itself is a $\mu$-partition of $X$ then trivially ${\Ha}(\mu;\Q)={\ha}(\mu;\Q)$.

As it was mentioned in the previous subsection, the partition describes the way of coding elements of $X$ by the elements of some countable coding set. Given the maximal error we are allowed to make in the process of lossy coding, represented by the measurable family $\Q$ of $X$, we consider all $\Q$-acceptable partitions and choose the one which provides the lowest entropy.

To see that ${\Ha}(\mu;\Q)$ does not have to be attained it is sufficient to use the trivial example from \cite[Example II.1]{Sm}:
\begin{example}
Let $X=(0,1)$, $\Sigma$ be a sigma algebra generated by all Borel subsets of $(0,1)$, $\mu$ be a Lebesgue measure and $\Q$ be an error-control family defined by 
$$
\Q = \{[a,b]: 0 < a < b < 1\}.
$$ 
Clearly ${\Ha}(\mu;\Q) = 0$ but for every $\mu$-partition $\P \prec \Q$, we have ${\Ha}(\mu;\Q) > 0$ when $\alpha \in (0,\infty)\setminus\{1\}$.
\end{example}

Inspired by the reasoning used in \cite{Sm} we construct a definition of weighted R\'enyi $\mu$-entropy of order $\alpha$. The form of set $\V(\mu;\Q)$ -- a division of a measure $\mu$ with respect to error-control family $\Q$, remains the same as in formula (\ref{ZbiorV}). The function form $\V(\mu;\Q)$ defines the set of measures. These measures determine the probability of encoding the given element with the code represented by the specific measure. Since every element $x \in X$ can be encoded once with one code and second time by another one, the coding is called random.

\begin{definition}
Let $\Q \subset \Sigma$ and let $\alpha \in (0,\infty)\setminus\{1\}$. We define the {\em weighted R\'enyi $\mu$-entropy} of order $\alpha$ of $\m \in \V(\mu;\Q)$ by
\begin{equation} \label{defWeiRen}
{\hva}(\mu; \m):=\frac{1}{1-\alpha}\log_2(\sum_{Q \in \Q} \m_Q(X)^\alpha) \text{.}
\end{equation}
The weighted R\'enyi $\mu$-entropy of order $\alpha$ of $\Q$ is defined by
$$
{\Hva}(\mu; \Q) := \inf \{ {\hva}(\mu; \m) \in [0,\infty] : \m \in \V(\mu,\Q) \} \text{.}
$$
\end{definition}
The sum in formula (\ref{defWeiRen}) is taken over $Q \in \Q$ such that $\m_Q(X) > 0$, consequently this is a countable sum.

The above definitions allow to perform the operations on functions when computing the entropy. Such operations will be crucial to derive some estimations of R\'enyi entropy and entropy dimension of the mixture of sources shown in the following sections. To use all the advantages of weighted R\'enyi entropy it remains to show the equivalence between weighted and classical entropy. It is made in the next section.

We now make some additional notations and observations which we will refer to very often in future. We denote by $g_\alpha$ and its inverse $g^{-1}_\alpha$ the following functions:
$$
g_\alpha(x) = 2^{(1-\alpha)x} \text{, } g^{-1}_\alpha(x) = \frac{1}{1-\alpha}\log_2(x).
$$

Then the observation is valid:
\begin{observation} \label{obserwacja} 
\begin{enumerate}
\item If $\alpha \in (0,1)$ then:
\begin{enumerate}
\item $g^{-1}_\alpha$ and $g_\alpha$ are ascending,\label{asc}
\item $x \rightarrow x^\alpha$ is subadditive,\label{sub}
\item $x \rightarrow x^\alpha$ is concave \label{conc}
\end{enumerate} 
\item If $\alpha \in (1,\infty)$ then:
\begin{enumerate}
\item $g^{-1}_\alpha$ and $g_\alpha$ are descending,\label{des}
\item $x \rightarrow x^\alpha$ is superadditive,\label{sup}
\item $x \rightarrow x^\alpha$ is convex.\label{conv}
\end{enumerate}
\end{enumerate}
\end{observation}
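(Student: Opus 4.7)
The statement is a collection of elementary one-variable facts about the functions $g_\alpha$, $g_\alpha^{-1}$ and $x \mapsto x^\alpha$, so the plan is to reduce each item to a direct calculus computation, splitting cleanly on the sign of $1-\alpha$.

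For the monotonicity claims (items (a) in both parts) I would just differentiate: $g_\alpha'(x) = (1-\alpha)(\ln 2) \cdot 2^{(1-\alpha)x}$ has the sign of $1-\alpha$, so $g_\alpha$ is strictly increasing for $\alpha \in (0,1)$ and strictly decreasing for $\alpha \in (1,\infty)$. For $g_\alpha^{-1}(x) = \frac{1}{1-\alpha}\log_2 x$ the derivative $\frac{1}{(1-\alpha)x\ln 2}$ (on $x>0$) again has the sign of $1-\alpha$, yielding the same dichotomy.

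For the concavity/convexity claims (items (c)) I would compute $f''(x) = \alpha(\alpha-1)x^{\alpha-2}$ on $x>0$. The factor $x^{\alpha-2}$ is positive, so the sign of $f''$ is the sign of $\alpha(\alpha-1)$: negative on $(0,1)$ (concavity) and positive on $(1,\infty)$ (convexity).

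For the sub/superadditivity claims (items (b)) the cleanest route is a normalisation trick. For $a,b>0$ set $t = a/(a+b)$ so that $1-t = b/(a+b)$; then
\[
\frac{a^\alpha + b^\alpha}{(a+b)^\alpha} = t^\alpha + (1-t)^\alpha,
\]
so the question reduces to showing that $t^\alpha + (1-t)^\alpha \geq 1$ when $\alpha \in (0,1)$ and the reverse inequality when $\alpha > 1$. For $t \in [0,1]$ and $\alpha \in (0,1)$ one has $t^\alpha \geq t$ (since the exponent is smaller on a base below $1$), and likewise $(1-t)^\alpha \geq 1-t$; summing gives $t^\alpha + (1-t)^\alpha \geq 1$. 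The inequalities reverse when $\alpha > 1$, giving superadditivity. Degenerate cases with $a=0$ or $b=0$ are trivial.

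None of the steps poses a real obstacle; the only one requiring more than a line is the sub/superadditivity, but the normalisation reduces it to the trivial scalar inequality $t^\alpha \gtreqless t$ on $[0,1]$ according to whether $\alpha \lessgtr 1$.
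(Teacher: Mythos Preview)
Your argument is correct in every part; each item is handled by a clean one-line calculus computation, and the normalisation trick for sub/superadditivity is a standard and efficient reduction. Note, however, that the paper does not actually supply a proof of this Observation: it is stated as a self-evident collection of elementary facts and used freely thereafter, so there is no ``paper's own proof'' to compare against --- your write-up simply fills in details the authors left implicit.
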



\section{Equivalence between classical and weighted R\'enyi entropy of order $\alpha$.}

The purpose of this section is to show that weighted R\'enyi entropy with respect to the family $\Q \subset \Sigma$ equals the classical R\'enyi entropy of $\Q$. It will allow us to use benefits of this alternative definition in further analysis. In the proofs we apply the idea used in \cite{Sm}.

The equality will be derived in two steps. First we show the inequality ${\Hva}(\mu; \Q) \leq {\Ha}(\mu; \Q)$. The inequality can be interpreted as a deterministic coding is a special case of specific random one. More difficult is to show the opposite inequality. It involves the application of Hardy-Littlewood-Polya Theorem. 

We start with first inequality:
\begin{prop} \label{prop}
Let $\alpha \in (0,\infty)\setminus\{1\}$. Then
$$
{\Hva}(\mu; \Q) \leq {\Ha}(\mu; \Q),
$$
for every family $\Q \subset \Sigma$.
\end{prop}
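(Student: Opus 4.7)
The plan is to show the inequality by constructing, from any $\Q$-acceptable $\mu$-partition $\P$, an associated division $\m \in W(\mu; \Q)$ whose weighted R\'enyi entropy is no larger than $\ha(\mu; \P)$. Taking the infimum over all such $\P$ would then yield the desired estimate. If no $\mu$-partition $\P \prec \Q$ exists, then $\Ha(\mu; \Q) = \infty$ by convention and there is nothing to prove.

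Fix a $\mu$-partition $\P \prec \Q$. Since $\P$ is refined by $\Q$, for each $P \in \P$ I can choose some $Q_P \in \Q$ with $P \subset Q_P$. I would then define, for every $Q \in \Q$,
$$
\m_Q(A) := \sum_{P \in \P \,:\, Q_P = Q} \mu(A \cap P), \quad A \in \Sigma.
$$
It is immediate that $\m_Q$ is a measure concentrated on $Q$, and that $\sum_{Q \in \Q} \m_Q = \mu$ (since the $P$'s cover $X$ up to a $\mu$-null set). Hence $\m \in \V(\mu; \Q)$, and $\m_Q(X) = \sum_{P: Q_P = Q} \mu(P)$.

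The core of the argument is to compare $\sum_{Q} \m_Q(X)^\alpha$ with $\sum_{P} \mu(P)^\alpha$, and here the two cases of Observation~\ref{obserwacja} mesh perfectly. For $\alpha \in (0,1)$, subadditivity of $x \mapsto x^\alpha$ gives $\m_Q(X)^\alpha \leq \sum_{P: Q_P = Q} \mu(P)^\alpha$, so summation over $\Q$ yields $\sum_{Q} \m_Q(X)^\alpha \leq \sum_{P} \mu(P)^\alpha$; since $g^{-1}_\alpha$ is ascending, applying it preserves the inequality and gives $\hva(\mu;\m) \leq \ha(\mu;\P)$. For $\alpha \in (1,\infty)$, superadditivity reverses the direction: $\sum_{Q} \m_Q(X)^\alpha \geq \sum_{P} \mu(P)^\alpha$; but $g^{-1}_\alpha$ is now descending, so again $\hva(\mu;\m) \leq \ha(\mu;\P)$. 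Taking the infimum over $\P \prec \Q$ on the right, and noting that $\Hva(\mu; \Q) \leq \hva(\mu; \m)$ by definition, completes the proof.

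There is no serious obstacle: the one subtlety is that the direction of the estimate on $\sum x^\alpha$ depends on whether $\alpha < 1$ or $\alpha > 1$, but this is exactly compensated by the corresponding change in monotonicity of $g^{-1}_\alpha$, so both cases collapse into a single conclusion. The only minor care needed is bookkeeping: ensuring the double sum $\sum_{Q}\sum_{P: Q_P = Q}$ is over a countable index set (guaranteed because $\P$ is countable) and that the choice function $P \mapsto Q_P$ is well defined (which uses only the definition of $\P \prec \Q$).
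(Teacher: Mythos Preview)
Your proof is correct and essentially identical to the paper's own argument: both construct $\m$ from a $\Q$-acceptable $\mu$-partition $\P$ by choosing for each $P$ a covering set in $\Q$ (the paper calls this map $\pi$, you call it $P\mapsto Q_P$) and setting $\m_Q$ to be $\mu$ restricted to the union of the $P$'s sent to $Q$, then comparing $\sum_Q \m_Q(X)^\alpha$ with $\sum_P \mu(P)^\alpha$ via sub-/superadditivity of $x\mapsto x^\alpha$ together with the matching monotonicity of $g_\alpha^{-1}$. The only difference is notational.
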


\begin{proof}
Let us first observe that if there is no $\mu$-partition finer than $\Q$ then ${\Ha}(\mu;\Q)=\infty$ and the inequality holds trivially.

Thus let us assume that it is not the case. Let $\P$ be a $\mu$-partition finer than $\Q$. Our aim is to construct a function $\m \in \V(\mu;\Q)$ with lower entropy than $\P$. 

First, since $\P \prec \Q$, then we obtain a mapping $\pi: \P \to \Q$ such that $P \subset \pi(P)$. Next we put 
$$
\P_\Q := \{P_Q\}_{Q \in \Q},
$$
where $P_Q:=\bigcup\limits_{P:\pi(P)=Q}P$. Finally, we define $\m:\Q \ni Q \rightarrow \mu_{|P_Q} \in \MA$.

We verify that $\m \in \V(\mu; \Q)$. Clearly, $\P_\Q$ is a $\mu$-partition and $P_Q \subset Q$ for every $Q \in \Q$. Hence
$$
\sum_{Q \in \Q} \m_Q(X) = \sum_{Q \in \Q} \mu_{|P_Q}(Q) 
= \sum_{Q \in \Q} \mu(P_Q) = \mu(X).
$$
The above sums are taken only over $Q \in \Q$ such that $\m_Q(X) > 0$. Moreover, we have
$$
\m_Q(X \setminus Q) = \mu_{|P_Q}(X \setminus Q) 
\leq \mu_{|Q}(X \setminus Q)= 0,
$$
for $Q \in \Q$. We obtain that $\m \in \V(\mu;\Q)$. 

It remains to check that $\hva(\mu;\m) \leq \ha(\mu;\P)$. To see this we use Observation \ref{obserwacja}: \ref{asc} and \ref{sub} for $\alpha \in (0,1)$ or \ref{des} and \ref{sup} for $\alpha \in (1,\infty)$. More precisely, we have:
$$
{\hva}(\mu; \m) 
= \frac{1}{1-\alpha}\log_2\big(\sum_{Q \in \Q}\m_Q(X)^\alpha\big) 
$$
$$
= \frac{1}{1-\alpha}\log_2\big(\sum_{Q \in \Q}\mu_{|P_Q}(X)^\alpha\big) 
= \frac{1}{1-\alpha}\log_2\big(\sum_{Q \in \Q}\mu(P_Q)^\alpha\big) 
$$
$$
= \frac{1}{1-\alpha}\log_2\big(\sum_{Q \in \Q}\mu(\bigcup_{P: \pi(P)=Q}P)^\alpha\big) 
\leq \frac{1}{1-\alpha}\log_2\big(\sum_{Q \in \Q} \sum_{P: \pi(P)=Q}\mu(P)^\alpha\big) 
$$
$$
= \frac{1}{1-\alpha}\log_2\big(\sum_{P \in \P} \mu(P)^\alpha\big)
={\ha}(\mu; \P).
$$
As $P$ was chosen as arbitrary partition, we conclude that ${\Hva}(\mu; \Q) \leq {\Ha}(\mu; \Q)$ for $\alpha \in (0,\infty)\setminus\{1\}$.
\end{proof}

As it was mentioned to derive the inequality ${\Hva}(\mu;\Q) \geq {\Ha}(\mu;\Q)
$ it is necessary to use Hardy Littlewood Polya Theorem. The version of Hardy Littlewood Polya Theorem for finite sequences is given in \cite[Theorem 1.5.4]{Ni}. Its generalisation for infinite sequences can be relatively easily achieved (see \cite[Appendix A]{Sm}). Let us recall this Theorem:
\medskip
\begin{flushleft} Hardy Littlewood Polya Theorem. \emph{Let $a>0$ and let $\varphi:[0,a] \to (0,\infty)$, $\varphi(0)=0$ be a continuous function.
Let $(x_i)_{i \in I}, (y_i)_{i \in I} \subset [0,a]$ be given sequences where
either $I=\N$ or $I=\{1,\ldots,N\}$ for a certain $N \in \N$. We assume that $(x_i)_{i \in I}$ is a nonincreasing sequence and}
$$
\sum_{i=1}^n x_i \leq \sum_{i=1}^n y_i \for n \in I,
$$
$$
\sum_{i \in I}x_i=\sum_{i \in I}y_i.
$$
\emph{Then}
\begin{itemize}
\item $\sum_{i \in I} \varphi(x_i) \geq \sum_{i \in I} \varphi(y_j)$ if $\varphi$ is concave,
\item $\sum_{i \in I} \varphi(x_i) \leq \sum_{i \in I} \varphi(y_j)$ if $\varphi$ is convex.
\end{itemize}
\medskip
\end{flushleft}

We first show an additional proposition:

\begin{prop} \label{waz}
Let $\Q =\{Q_i\}_{i \in I}$ be a family of measurable subsets of $X$, where either $I=\N$ or $I=\{1,\ldots,N\}$ for a certain $N \in \N$. Let $\m \in \V(\mu;\Q)$. We assume that
\begin{itemize}
 \item $\mu(X \setminus \bigcup\limits_{i \in I}Q_i) = 0$,
 \item the sequence $I \ni i \rightarrow \m_{Q_i}(X)$ is nonincreasing.
\end{itemize}
We define the family $\P=\{P_i\}_{i \in I} \subset \Sigma$ by the formula
$$
P_1:=Q_1, \, P_i:=Q_i \setminus \bigcup_{k=1}^{i-1} Q_{k} \for i \in I, i \geq 2.
$$
Then $\P$ is a $\mu$-partition, $\P \prec \Q$ and
\begin{equation} \label{cosik}
{\hva}(\mu; \m) \geq {\ha}(\mu;\P)
\end{equation}
for $\alpha \in (0,\infty)\setminus\{1\}$.
\end{prop}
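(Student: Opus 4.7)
The statement splits into two independent tasks: verifying that $\P$ is a $\mu$-partition with $\P\prec\Q$, and establishing the entropy inequality via the Hardy--Littlewood--Polya theorem applied to the sequences $x_i:=\mu(P_i)$ and $y_i:=\m_{Q_i}(X)$. The first task is routine. By construction the $P_i$ are pairwise disjoint, each $P_i\subset Q_i$ (so $\P\prec\Q$), and a trivial induction gives $\bigcup_{i=1}^n P_i=\bigcup_{i=1}^n Q_i$ for every $n$. Passing to the union over $i\in I$ and combining with the hypothesis $\mu(X\setminus\bigcup_{i\in I}Q_i)=0$ shows that $\P$ is a $\mu$-partition of $X$.

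For the entropy inequality, the strategy is to compare the sums $\sum_i y_i^\alpha$ and $\sum_i x_i^\alpha$ using Hardy--Littlewood--Polya with the nonincreasing sequence $(y_i)$, and then apply $g^{-1}_\alpha$. The two hypotheses of HLP need to be checked. For the totals, $\sum_{i\in I}y_i=\mu(X)=\sum_{i\in I}x_i$: the first equality follows by evaluating the identity $\sum_i \m_{Q_i}=\mu$ at $X$, and the second from disjointness of $(P_i)$ together with the fact that their union has full $\mu$-measure. For the partial sums, the support condition $\m_{Q_i}(X\setminus Q_i)=0$ gives $y_i=\m_{Q_i}(Q_i)$, and since $Q_i\subset\bigcup_{k=1}^n Q_k$ for every $i\le n$, I compute
\[
\sum_{i=1}^n y_i = \sum_{i=1}^n \m_{Q_i}(Q_i) \le \sum_{i\in I}\m_{Q_i}\Big(\bigcup_{k=1}^n Q_k\Big) = \mu\Big(\bigcup_{k=1}^n Q_k\Big) = \sum_{i=1}^n \mu(P_i) = \sum_{i=1}^n x_i,
\]
where the penultimate equality uses $\bigcup_{k=1}^n P_k=\bigcup_{k=1}^n Q_k$ together with disjointness of $(P_i)$.

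Having verified the HLP hypotheses, I apply the theorem with $\varphi(t)=t^\alpha$ and then invoke Observation~\ref{obserwacja} to handle the two ranges of $\alpha$ uniformly. For $\alpha\in(0,1)$, $\varphi$ is concave (item~\ref{conc}), so HLP gives $\sum_i y_i^\alpha\ge\sum_i x_i^\alpha$; applying the ascending $g^{-1}_\alpha$ (item~\ref{asc}) yields $\hva(\mu;\m)\ge\ha(\mu;\P)$. For $\alpha\in(1,\infty)$, $\varphi$ is convex (item~\ref{conv}), so $\sum_i y_i^\alpha\le\sum_i x_i^\alpha$; now $g^{-1}_\alpha$ is descending (item~\ref{des}), which reverses the inequality and produces \eqref{cosik} once again. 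The only genuinely nontrivial step is the partial-sum comparison: once one recognises that $\bigcup_{k=1}^n P_k=\bigcup_{k=1}^n Q_k$ and exploits the support property of $\m$, the remainder is bookkeeping plus a clean case-split on the sign of $1-\alpha$.
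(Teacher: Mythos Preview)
Your proof is correct and follows essentially the same approach as the paper's: verify that $\P$ is a $\mu$-partition with $\P\prec\Q$, set up the two sequences $\m_{Q_i}(X)$ and $\mu(P_i)$, check the partial-sum inequality via the support condition and the identity $\bigcup_{k\le n}P_k=\bigcup_{k\le n}Q_k$, and then apply Hardy--Littlewood--Polya to $\varphi(t)=t^\alpha$ together with the monotonicity of $g_\alpha^{-1}$. The only cosmetic difference is that you interchange the labels $x_i$ and $y_i$ relative to the paper (so that your nonincreasing sequence is called $(y_i)$), and you bound the partial sum by extending the summation to all $i\in I$ rather than using $\sum_{i\le n}\m_{Q_i}\le\mu$; both variants are equally valid.
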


\begin{proof}
By the definition of family $\P$, we get that $\P \prec \Q$. Moreover, $\P$ is a $\mu$-partition since $\mu(X \setminus \bigcup\limits_{i \in I}Q_i)=0$ and $\bigcup\limits_{i \in I} P_i = \bigcup\limits_{i \in I} Q_i$.

To prove (\ref{cosik}) we use Hardy Littlewood Polya Theorem. The sequences $(x_i)_{i \in I} \subset [0,1]$ and $(y_i)_{i \in I} \subset [0,1]$ are defined by the formulas
$$
x_i:=\m_{Q_i}(X)=\m_{Q_i}(Q_i), \, 
y_i:=\mu(P_i)
$$
for $i \in I$.  

Directly from the assumption we get that $(x_i)_{i \in I}$ is nonincreasing and
$$
\sum_{i \in I} x_i=\mu(X)=\sum_{i \in I} y_i.
$$
Moreover, for every $n \in I$:
$$
\sum_{i = 1}^n x_i=\sum_{i = 1}^n \m_{Q_i}(Q_i) 
=(\sum_{i = 1}^n \m_{Q_i})(Q_1 \cup \ldots \cup Q_n)
$$
$$
\leq \mu(Q_1 \cup \ldots \cup Q_n)
=\sum_{i = 1}^n \mu(P_i)=\sum_{i = 1}^n y_i.
$$

We verified that sequences satisfy assumptions of Hardy Littlewood Polya Theorem. Thus given a function $x \rightarrow x^\alpha$ (we also use Observation \ref{obserwacja}: \ref{asc} and \ref{conc} for $\alpha \in (0,1)$ or \ref{des} and \ref{conv} for $\alpha \in (1,\infty)$) we conclude that
$$
{\hva}(\mu; \m) = \frac{1}{1-\alpha}\log_2\big(\sum_{i \in I} \m_{Q_i}(X)^\alpha\big) 
= \frac{1}{1-\alpha}\log_2(\sum_{i \in I}x_i^\alpha) 
$$
$$
\geq \frac{1}{1-\alpha}\log_2(\sum_{i \in I}y_i^\alpha) 
= \frac{1}{1-\alpha}\log_2\big(\sum_{i \in I}\mu(P_i)^\alpha\big)
={\ha}(\mu;\P),
$$
which completes the proof.
\end{proof}

We are now ready to formulate and complete the proof of the equivalence between classical and weighted R\'enyi $\mu$-entropy of order $\alpha$.

\begin{theo} \label{wnWaz}
Let $\Q \subset \Sigma$. Then weighted R\'enyi $\mu$-entropy coincides with the classical R\'enyi $\mu$-entropy, i.e.
$$
{\Hva}(\mu;\Q) = {\Ha}(\mu;\Q)
$$
for $\alpha \in (0,\infty)\setminus\{1\}$.
\end{theo}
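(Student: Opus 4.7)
The plan is to combine Proposition \ref{prop} with Proposition \ref{waz}. Proposition \ref{prop} already gives the inequality $\Hva(\mu;\Q) \leq \Ha(\mu;\Q)$, so the entire task reduces to proving the reverse inequality $\Hva(\mu;\Q) \geq \Ha(\mu;\Q)$. For this, I would fix an arbitrary $\m \in \V(\mu;\Q)$ and construct, out of $\m$, a $\mu$-partition $\P \prec \Q$ whose classical R\'enyi entropy $\ha(\mu;\P)$ is no larger than $\hva(\mu;\m)$; taking the infimum over $\m$ then yields the desired inequality.

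First I would restrict attention to the subfamily $\Q' := \{Q \in \Q : \m_Q(X) > 0\}$. Since $\sum_{Q \in \Q}\m_Q(X) = \mu(X) = 1$, this subfamily is countable, and since each $\m_Q$ is concentrated on $Q$ and $\sum_{Q \in \Q'}\m_Q = \mu$, we have $\mu\bigl(X \setminus \bigcup_{Q \in \Q'} Q\bigr) = 0$. The definition of $\hva$ only sums over indices where $\m_Q(X) > 0$, so passing from $\Q$ to $\Q'$ does not change the entropy value. Next I would reorder $\Q'$ as $\{Q_i\}_{i \in I}$, with $I = \N$ or $I = \{1,\dots,N\}$, in such a way that the sequence $i \mapsto \m_{Q_i}(X)$ is nonincreasing. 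This is possible precisely because $(\m_{Q_i}(X))$ is a summable sequence of positive reals, so it can be rearranged in nonincreasing order.

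At this point the restricted object $\m$ belongs to $\V(\mu;\Q')$ and satisfies exactly the hypotheses of Proposition \ref{waz}. Applying that proposition produces a $\mu$-partition $\P$ with $\P \prec \Q' \subset \Q$ and $\hva(\mu;\m) \geq \ha(\mu;\P)$. Since $\P$ is $\Q$-acceptable, $\ha(\mu;\P) \geq \Ha(\mu;\Q)$, hence $\hva(\mu;\m) \geq \Ha(\mu;\Q)$. Taking the infimum over $\m \in \V(\mu;\Q)$ gives $\Hva(\mu;\Q) \geq \Ha(\mu;\Q)$, completing the argument.

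The main obstacle is really already absorbed into Proposition \ref{waz}, which encodes the Hardy--Littlewood--Polya comparison; the remaining step is the bookkeeping that allows the proposition to be applied, namely the countability of the effective support of $\m$ and the rearrangement into a nonincreasing sequence. The one subtlety worth being careful about is that, in the countably infinite case, the rearrangement of the sequence of masses $\m_{Q_i}(X)$ does not disturb either the total sum $\mu(X)$ or the value of $\sum_i \m_{Q_i}(X)^\alpha$ appearing in $\hva$, which is why the reordering is harmless.
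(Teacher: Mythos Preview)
Your proposal is correct and follows essentially the same route as the paper's own proof: invoke Proposition~\ref{prop} for one inequality, then for the reverse pass to the countable subfamily $\{Q:\m_Q(X)>0\}$, reorder by nonincreasing mass, and apply Proposition~\ref{waz}. The only step the paper makes explicit that you leave implicit is the trivial case $\V(\mu;\Q)=\emptyset$, but this is automatically absorbed when you take the infimum over $\m$.
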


\begin{proof}
It is sufficient to show that ${\Hva}(\mu;\Q) \geq {\Ha}(\mu;\Q)$ since the opposite inequality follows directly from Proposition \ref{prop}.

Let us first observe that if $\V(\mu;\Q) = \emptyset$ then ${\Hva}(\mu;\Q) = \infty$ and trivially ${\Hva}(\mu;\Q) \geq {\Ha}(\mu;\Q)$. 

We discuss the case when $\V(\mu;\Q) \neq \emptyset$. Let $\m \in \V(\mu;\Q)$ be an arbitrary function. We define the subset of family $\Q$ by:
$$
\tilde{\Q} := \{Q \in \Q : \m_Q(X) > 0\}.
$$
Let us notice that $\tilde{\Q}$ is a countable family since $\sum\limits_{Q \in \tilde{\Q}} \m_Q(X) = 1$. Clearly, $\tilde{\m}:=\m_{|\tilde{\Q}} \in \V(\mu; \tilde{\Q})$. Moreover, $\tilde{\Q} \prec \Q$ and ${\hva}(\mu;\tilde{\m})={\hva}(\mu;\m)$.

As $\tilde{\Q}$ is countable, we may find a set of indices $I \subset \N$ such that $\tilde{\Q}=\{Q_i\}_{i \in I}$ and the sequence $I \ni i \rightarrow \m_{Q_i}(X)$ is nonincreasing. Making use of Proposition \ref{waz} we construct a $\mu$-partition $\P \prec \tilde{\Q}$, which satisfies
$$
{\hva}(\mu;\tilde{\m}) \geq {\ha}(\mu;\P).
$$
This completes the proof since $\P \prec \tilde{\Q} \prec \Q$ and ${\hva}(\mu;\m) = {\hva}(\mu;\tilde{\m}) \geq {\ha}(\mu;\P)$.
\end{proof}

Since we show the equality between classical and weighted R\'enyi entropy we use one symbol ${\Ha}(\mu;\Q)$ to denote the R\'enyi entropy of order $\alpha$ with respect to measurable family $\Q$ of $X$.


\section{R\'enyi entropy of order $\alpha$ of the mixture of sources.}

In this section we will show how to apply the definition of weighted R\'enyi entropy to estimate the R\'enyi entropy of the mixture of sources (see Problem \ref{probOfMix}).

Let us start with the proposition:

\begin{prop} \label{propozycja}
We assume that $\alpha \in (0,\infty)\setminus\{1\}$ and $n \in \N$. Let $a_k \in (0,1)$ for $k \in \{1,\ldots,n\}$ be such that $\sum\limits_{k=1}^n a_k = 1$ and let $\{\mu_k\}_{k=1}^n \subset \M1$. We define $\mu := \sum\limits_{k=1}^n a_k \mu_k \in \M1$. 
\begin{itemize}
\item
If $\P$ is a $\mu$-partition of $X$ then $\P$ is a $\mu_k$-partition of $X$ for $k \in \{1,\ldots,n\}$ and
\begin{equation} \label{pierwsza}
{\ha}(\mu; \P) \geq g^{-1}\big[\sum_{k = 1}^n a_k g({\ha}(\mu_k; \P))\big].
\end{equation}
\item If $\Q \subset \Sigma$ and $\m^k \in \V(\mu_k;\Q)$ for $k \in \{1,\ldots,n\}$ then $\m := \sum\limits_{k=1}^n a_k \m^k \in \V(\mu;\Q)$ and
\begin{equation} \label{druga}
{\hva}(\mu; \m) \leq g^{-1}\big[\sum_{k = 1}^n a_k^\alpha g({\hva}(\mu_k; \m^k))\big].
\end{equation}
\end{itemize}
\end{prop}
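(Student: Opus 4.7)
The plan is to unfold both inequalities via the identities $g(\ha(\nu;\P)) = \sum_{P \in \P} \nu(P)^\alpha$ and $g(\hva(\nu;\m^k)) = \sum_{Q \in \Q} \m^k_Q(X)^\alpha$, which follow directly from the definitions of $\ha$, $\hva$ and $g = g_\alpha$. These turn each claim into a pointwise algebraic inequality for $x \mapsto x^\alpha$ applied to the nonnegative summands indexed by $P$ or $Q$, after which I apply $g^{-1}$ and invoke Observation \ref{obserwacja}: \ref{asc} or \ref{des} to get the direction of the inequality right.

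For (\ref{pierwsza}) I would first verify that $\P$ is a $\mu_k$-partition: from $0 = \mu(X \setminus \bigcup_P P) = \sum_k a_k \mu_k(X \setminus \bigcup_P P)$ combined with $a_k > 0$, each $\mu_k$-term must vanish. Then, writing $\mu(P) = \sum_k a_k \mu_k(P)$ and applying Jensen's inequality termwise in $P$ (Observation \ref{obserwacja}: \ref{conc} for $\alpha \in (0,1)$ and \ref{conv} for $\alpha > 1$) gives $\mu(P)^\alpha \geq \sum_k a_k \mu_k(P)^\alpha$ for $\alpha < 1$ and the reversed inequality for $\alpha > 1$. Summing over $P$ and applying $g^{-1}$ --- ascending in the first range and descending in the second --- produces (\ref{pierwsza}) uniformly in $\alpha$.

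For (\ref{druga}), membership $\m \in \V(\mu;\Q)$ is immediate: $\m_Q(X \setminus Q) = \sum_k a_k \m^k_Q(X \setminus Q) = 0$, and $\sum_Q \m_Q = \sum_k a_k \sum_Q \m^k_Q = \sum_k a_k \mu_k = \mu$ by exchanging nonnegative countable sums. For the bound, the right-hand side of (\ref{druga}) equals $g^{-1}\bigl(\sum_Q \sum_k (a_k \m^k_Q(X))^\alpha\bigr)$ while the left-hand side equals $g^{-1}\bigl(\sum_Q (\sum_k a_k \m^k_Q(X))^\alpha\bigr)$, so the relevant termwise fact is the sub/superadditivity of $x \mapsto x^\alpha$ (Observation \ref{obserwacja}: \ref{sub} and \ref{sup}), which bounds $(\sum_k a_k \m^k_Q(X))^\alpha$ above by $\sum_k (a_k \m^k_Q(X))^\alpha$ for $\alpha < 1$ and below by it for $\alpha > 1$; the monotonicity of $g^{-1}$ flips simultaneously to yield (\ref{druga}) in both ranges. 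There is no substantive analytic obstacle --- all summation interchanges are of nonnegative terms over countable index sets --- the only point requiring attention is the twin reversal at $\alpha = 1$, where both $x^\alpha$ and $g^{-1}$ change direction, and it is precisely their cancellation that lets the two cases be subsumed under a single formula.
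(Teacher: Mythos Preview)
Your proof is correct and follows essentially the same route as the paper's own argument: Jensen (concavity/convexity of $x\mapsto x^\alpha$) for (\ref{pierwsza}), sub-/superadditivity of $x\mapsto x^\alpha$ for (\ref{druga}), each combined with the appropriate monotonicity of $g_\alpha^{-1}$ from Observation~\ref{obserwacja}. You merely spell out in more detail the points the paper dismisses as ``easy to see'' (why $\P$ is a $\mu_k$-partition and why $\m\in\V(\mu;\Q)$), and make explicit the simultaneous sign flip at $\alpha=1$.
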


\begin{proof}
It is easy to see that $\P$ is a $\mu_k$-partition of $X$ for every $k \in \{1,\ldots,n\}$. 

Making use of Observation \ref{obserwacja}: \ref{asc} and \ref{conc} for $\alpha \in (0,1)$ or \ref{des} and \ref{conv} for $\alpha \in (1,\infty)$, we have
$$
{\ha}(\mu;\P)= \frac{1}{1-\alpha} \log_2\big[\sum_{P \in \P}\big(\sum_{k=1}^n a_k\mu_k(P)\big)^\alpha\big]
$$
$$
\geq \frac{1}{1-\alpha} \log_2\big[\sum_{k=1}^n \big(a_k \sum_{P \in \P} \mu_k(P)^\alpha\big)\big]  
$$
$$
=\frac{1}{1-\alpha} \log_2\big[\sum_{k=1}^n a_k 2^{(1-\alpha) \ha(\mu_k;\P)}\big] 
$$
$$
=g^{-1}\big[\sum_{k = 1}^n a_k g({\ha}(\mu_k; \P))\big],	
$$
which proves (\ref{pierwsza}).

We derive the second part of the Proposition. Clearly, $\m \in \V(\mu;\Q)$. To verify (\ref{druga}) we use Observation \ref{obserwacja}: \ref{asc} and \ref{sub} for $\alpha \in (0,1)$ or \ref{des} and \ref{sup} for $\alpha \in (1,\infty)$:
$$
{\hva}(\mu;\m)= \frac{1}{1-\alpha} \log_2\big[\sum_{Q \in \Q}\big(\sum_{k=1}^n a_k\m_Q^k(X)\big)^\alpha\big]
$$
$$
\leq \frac{1}{1-\alpha} \log_2\big[\sum_{k=1}^n \big(a_k^\alpha \sum_{Q \in \Q} \m_Q^k(X)^\alpha\big)\big]  
$$
$$
=\frac{1}{1-\alpha} \log_2\big[\sum_{k=1}^n a_k^\alpha 2^{(1-\alpha) \hva(\mu_k;\m^k)}\big] 
$$
$$
=g^{-1}\big[\sum_{k = 1}^n a_k^\alpha g({\hva}(\mu_k; \m^k))\big].
$$
\end{proof}

Below we present the main theorem in this section concerning the entropy of the mixture of sources. To see that this estimation is sharp we refer the reader to Example \ref{exSharp}.
\begin{theo} \label{corEnt}
Let $\alpha \in (0,\infty)\setminus\{1\}$ and $n \in \N$. We assume that $a_k \in [0,1]$ for $k \in \{1,\ldots,n\}$ be such that $\sum\limits_{k=1}^n a_k = 1$. Let $\{\mu_k\}_{k=1}^n \subset \M1$. If $\Q \subset \Sigma$ then
\begin{equation} \label{nierowWaz}
{\Ha}(\mu; \Q) \geq g^{-1}\big[\sum_{k = 1}^n a_k g({\Ha}(\mu_k; \Q))\big]
\end{equation}
and
\begin{equation} \label{nierowWaz2}
{\Ha}(\mu; \Q) \leq g^{-1}\big[\sum_{k = 1}^n a_k^\alpha g({\Ha}(\mu_k; \Q))\big].
\end{equation}
\end{theo}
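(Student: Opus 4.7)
The plan is to derive both inequalities by combining Proposition \ref{propozycja} with the equivalence Theorem \ref{wnWaz}, tracking the monotonicity of $g_\alpha$ and $g^{-1}_\alpha$ through Observation \ref{obserwacja}. As a preliminary reduction I would restrict attention to the indices $k$ with $a_k > 0$, since terms with $a_k = 0$ contribute nothing to $\mu$ and drop out of both sides with the convention $0\cdot(\cdot) = 0$; this places us inside the hypotheses of Proposition \ref{propozycja}.

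For the lower bound (\ref{nierowWaz}) I would argue on the classical side. Starting from an arbitrary $\mu$-partition $\P \prec \Q$, inequality (\ref{pierwsza}) bounds $\ha(\mu;\P)$ below in terms of $\ha(\mu_k;\P)$, and $\ha(\mu_k;\P) \geq \Ha(\mu_k;\Q)$ is built into the definition of the infimum. Chasing this inequality through the composition $g^{-1}_\alpha \circ \sum_k a_k g_\alpha(\cdot)$ preserves $\geq$ in both regimes: for $\alpha \in (0,1)$ both $g_\alpha$ and $g^{-1}_\alpha$ are ascending by Observation \ref{obserwacja}.\ref{asc}, while for $\alpha > 1$ each is descending by Observation \ref{obserwacja}.\ref{des}, so the two reversals cancel. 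Taking the infimum over $\P$ then yields (\ref{nierowWaz}); the case in which no $\mu$-partition refines $\Q$ is trivial, as the left-hand side is $\infty$.

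For the upper bound (\ref{nierowWaz2}) I would switch to the weighted side via Theorem \ref{wnWaz}. If some $\Ha(\mu_k;\Q)$ is infinite, the right-hand side collapses to $\infty$ regardless of $\alpha$ (directly via $g_\alpha(\infty) = \infty$ for $\alpha \in (0,1)$, and via $g_\alpha(\infty) = 0$ combined with $g^{-1}_\alpha(0) = \infty$ for $\alpha > 1$), so only the finite case is substantive. There, for each $k$ I would choose $\m^k_j \in \V(\mu_k;\Q)$ with $\hva(\mu_k;\m^k_j) \to \Ha(\mu_k;\Q)$ as $j \to \infty$; the convex combination $\m_j := \sum_k a_k \m^k_j$ belongs to $\V(\mu;\Q)$ by the second half of Proposition \ref{propozycja}, and (\ref{druga}) bounds $\hva(\mu;\m_j)$ above by $g^{-1}_\alpha\bigl[\sum_k a_k^\alpha g_\alpha(\hva(\mu_k;\m^k_j))\bigr]$. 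Letting $j \to \infty$ and using continuity of $g_\alpha$ and $g^{-1}_\alpha$ at the finite limit values delivers the desired bound on $\Hva(\mu;\Q) = \Ha(\mu;\Q)$.

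The only real obstacle is bookkeeping: dispatching the degenerate cases (vanishing weights, infinite component entropies, a family $\Q$ admitting no acceptable $\mu$-partition) and threading the monotonicity signs from Observation \ref{obserwacja} so that a single chain of inequalities covers $\alpha \in (0,1)$ and $\alpha \in (1,\infty)$ simultaneously. Once those signs are handled, the rest is a direct translation of Proposition \ref{propozycja} via Theorem \ref{wnWaz}.
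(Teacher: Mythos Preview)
Your proposal is correct and follows essentially the same route as the paper: the lower bound comes from Proposition~\ref{propozycja}\,(\ref{pierwsza}) applied to an arbitrary $\mu$-partition $\P\prec\Q$ and then passed to the infimum, while the upper bound comes from Proposition~\ref{propozycja}\,(\ref{druga}) on the weighted side, translated back via Theorem~\ref{wnWaz}. The only cosmetic differences are that the paper uses an $\varepsilon$-approximant in place of your sequences $\m^k_j$, and it dispatches the case $\Ha(\mu_k;\Q)=\infty$ by arguing that $\Ha(\mu;\Q)=\infty$ as well (so both sides are infinite), whereas you argue directly that the right-hand side of (\ref{nierowWaz2}) is infinite; both are fine.
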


\begin{proof}
Let us first consider the case when ${\Ha}(\mu_k;\Q) = \infty$ for a certain $k\in\{1,\ldots,n\}$. Then also ${\Ha}(\mu;\Q) = \infty$ and the inequalities hold trivially. 

Thus let us assume that for every $k\in\{1,\ldots,n\}$, ${\Ha}(\mu_k;\Q) < \infty$. Without loss of generality, we may assume also that $a_k \neq 0$ for every $k \in \{1,\ldots,n\}$. Let $\e>0$ be arbitrary and let $\mu := \sum\limits_{k=1}^n a_k \mu_k$. 

To prove the first inequality, we find a $\mu$-partition $\P$ finer than $\Q$ such that 
\begin{equation} \label{gwiazdka}
{\Ha}(\mu;\Q) \geq {\ha}(\mu;\P)-\e. 
\end{equation}
Consequently, by Proposition \ref{propozycja} and the definition of R\'enyi entropy, we have
$$
{\ha}(\mu;\P) = {\ha}(\sum_{k=1}^n a_k \mu_k; \P) 
$$
$$
\geq g^{-1}\big[\sum_{k = 1}^n a_k g({\ha}(\mu_k; \P))\big] \geq g^{-1}\big[\sum_{k = 1}^n a_k g({\Ha}(\mu_k; \P))\big].
$$
Finally by (\ref{gwiazdka}), we obtain
$$
{\Ha}(\mu;\Q) \geq {\ha}(\mu; \P) -\e 
\geq g^{-1}\big[\sum_{k = 1}^n a_k g({\Ha}(\mu_k; \P))\big] - \e,
$$
which proves (\ref{nierowWaz}).

We prove the inequality (\ref{nierowWaz2}). For each $k \in \{1,\ldots n\}$ we find $\m^k \in \V(\mu_k;\Q)$ satisfying
\begin{equation} \label{2gwiazdki}
{\hva}(\mu_k; \m^k) \leq {\Ha}(\mu_k;\Q) + \frac{\e}{n}.
\end{equation}

Making use of Proposition \ref{propozycja} and (\ref{2gwiazdki}), we have
$$
{\Ha}(\mu;\Q) \leq g^{-1}\big[\sum_{k = 1}^n a_k^\alpha g({\hva}(\mu_k; \m^k))\big] \leq g^{-1}\big[\sum_{k = 1}^n a_k^\alpha g({\Ha}(\mu_k; \m^k))\big] + \e.
$$
This completes the proof as $\e>0$ was an arbitrary number.
\end{proof}

The following examples confirms that the above estimation (\ref{nierowWaz}) and (\ref{nierowWaz2}) cannot be improved.

\begin{example} \label{exSharp}
Let us assume that $X=\{0,1\}$ and $\mu_1, \mu_2$ denote discrete measures such that:
$$
	\mu_1(\{0\})=1 \text{ and } \mu_2(\{1\})=1.
$$
Then, we have 
$$
{\Ha}(a_1\mu_1+a_2\mu_2) = \frac{1}{1-\alpha}\log_2(a_1^\alpha + a_2^\alpha).
$$ 
It is exactly the right side of the inequality (\ref{nierowWaz2}). 

On the other hand, if we consider two measures which satisfy $\mu_1 = \mu_2$, then 
$$
{\Ha}(a_1\mu_1+a_2\mu_2) = {\Ha}(\mu_1)={\Ha}(\mu_2)
$$
and it equals the right side of (\ref{nierowWaz}).
\end{example}

Let us observe the similarity between bounds obtained for both, Shannon entropy \cite[Theorem III.1]{Sm} and R\'enyi entropy of order $\alpha$ from Theorem \ref{corEnt}. Let us consider the functions:
$$
l_\alpha(x,y) = g_\alpha^{-1}(a_1 g_\alpha(x) + a_2 g_\alpha(y)),
$$
$$
u_\alpha(x,y) = g_\alpha^{-1}(a_1^\alpha g_\alpha(x) + a_2^\alpha g_\alpha(y)),
$$
which describe the lower and upper bound for the R\'enyi entropy of order $\alpha$. If $x, y$ are non negative real numbers then these functions converge to the corresponding bounds calculated for Shannon entropy as $\alpha \to 1$ i.e.:
$$
\left\{
   \begin{array}{ll}
		l_\alpha(x,y) \to a_1 x + a_2 y \\ 
		u_\alpha(x,y) \to a_1 x + a_2 y  - a_1 \log_2(a_1) - a_2 \log_2(a_2)
	\end{array}
\text{, when } \alpha \to 1
\right.
$$


\section{R\'enyi entropy dimension of order $\alpha$.}

In this section we show the estimation of R\'enyi entropy dimension of order $\alpha$ of the combination of measures in terms of the entropy dimension of the single measures. Before proceeding with it let us recall the definition of R\'enyi entropy dimension of order $\alpha$. In this section we additionally assume that $X$ is a metric space and $(X, \Sigma, \mu)$ is a probability space, where $\Sigma$ contains all Borel subsets of $X$.

Given $\d>0$ let us denote a family of all balls in $X$ with radius $\d$ by
$$
\B_\d := \{B(x,\d): x \in X\},
$$
where $B(x,\d)$ is a closed ball centred at $x$ with radius $\d$.
\begin{definition}
The \emph{upper and lower R\'enyi entropy dimension of order} $\alpha \in (0,\infty)\setminus\{1\}$ of measure $\mu \in \M1$ are defined by
$$
{\udim}_\alpha(\mu):=\limsup_{\d \to 0} \frac{{\Ha}(\mu;\B_\d)}{-\log_{2}(\d)}, \,
$$
$$
{\ldim}_\alpha(\mu):=\liminf_{\d \to 0} \frac{{\Ha}(\mu;\B_\d)}{-\log_{2}(\d)}.
$$
If the above are equal we say that $\mu$ has the \emph{R\'enyi entropy dimension of order $\alpha$} and denote it by ${\dim}_\alpha(\mu)$.
\end{definition}

The following theorem gives the estimation of the R\'enyi entropy dimension of the mixture of measures.
\begin{theo} \label{corDim}
Let $a_1, a_2 \in (0,1)$ be such that $a_1 + a_2 = 1$ and let $\mu_1, \mu_2 \in \M1$. If ${\udim}_\alpha(\mu_1) < \infty$ and ${\udim}_\alpha(\mu_2) < \infty$ then
\begin{equation} \label{jeden}
{\udim}_\alpha(a_1 \mu_1 + a_2 \mu_2) \leq \left\{
   \begin{array}{ll}
		\max \{\udim_\alpha(\mu_1),\udim_\alpha(\mu_2)\} & \mbox{for } \alpha \in (0,1), \\ 
		\min \{\udim_\alpha(\mu_1),\udim_\alpha(\mu_2)\} & \mbox{for } \alpha \in (1,\infty)
	\end{array}
\right.
\end{equation}
and
\begin{equation} \label{dwa}
{\ldim}_\alpha(a_1 \mu_1 + a_2 \mu_2) \geq \left\{
   \begin{array}{ll}
		\max \{\ldim_\alpha(\mu_1),\ldim_\alpha(\mu_2)\} & \mbox{for } \alpha \in (0,1),\\ 
		\min \{\ldim_\alpha(\mu_1),\ldim_\alpha(\mu_2)\} & \mbox{for } \alpha \in (1,\infty).
	\end{array}
\right.
\end{equation}
\end{theo}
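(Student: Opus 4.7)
The plan is to specialize Theorem \ref{corEnt} with $\Q = \B_\delta$, extract for each $\delta$ a pointwise comparison of the form $H(\delta) \geq H_\star(\delta) + C$ or $H(\delta) \leq H_\star(\delta) + C$ (with additive constant $C$ independent of $\delta$), and then divide by $-\log_2\delta$ and take the appropriate $\limsup$ or $\liminf$. Write $H(\delta) := \Ha(\mu;\B_\delta)$ and $H_k(\delta) := \Ha(\mu_k;\B_\delta)$; the finiteness hypothesis on the upper dimensions ensures these are finite for small $\delta$.

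For $\alpha \in (0,1)$ we have $1-\alpha > 0$, so $x \mapsto 2^{(1-\alpha)x}$ is increasing. From the lower estimate of Theorem \ref{corEnt}, keeping only the $k$-th term of the sum gives $H(\delta) \geq H_k(\delta) + \frac{\log_2 a_k}{1-\alpha}$ for each $k$; from the upper estimate, bounding each exponential by $2^{(1-\alpha)H_{\max}(\delta)}$ yields $H(\delta) \leq H_{\max}(\delta) + \frac{\log_2(a_1^\alpha+a_2^\alpha)}{1-\alpha}$. For $\alpha \in (1,\infty)$ the sign of $1-\alpha$ flips and so does the monotonicity of $g_\alpha$: now $\sum_k a_k 2^{(1-\alpha)H_k(\delta)} \leq 2^{(1-\alpha)H_{\min}(\delta)}$, and dividing a $\log_2$ inequality by the negative factor $1-\alpha$ reverses direction, giving the constant-free bound $H(\delta) \geq H_{\min}(\delta)$. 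Symmetrically, retaining only the $k$-th term in the upper-bound sum produces $H(\delta) \leq H_k(\delta) + \frac{\alpha \log_2 a_k}{1-\alpha}$ for each $k$.

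Dividing through by $-\log_2\delta \to \infty$ kills every additive constant. Combining this with the identities $\limsup_{\delta\to 0}\max\{f,g\} = \max\{\limsup f,\limsup g\}$ and $\liminf_{\delta\to 0}\min\{f,g\} = \min\{\liminf f,\liminf g\}$ (which follow by taking one-sided subsequential limits), together with the pointwise bounds above, yields all four inequalities of (\ref{jeden})--(\ref{dwa}): the single-$k$ bounds force $\ldim_\alpha(\mu) \geq \ldim_\alpha(\mu_k)$ (respectively $\udim_\alpha(\mu) \leq \udim_\alpha(\mu_k)$) for each $k$, hence the $\max$ (resp.\ $\min$), while the $H_{\max}$ and $H_{\min}$ bounds deliver the other pair of inequalities directly through those identities.

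The main bookkeeping hurdle is tracking the sign of $1-\alpha$ when dividing logarithmic inequalities. The asymmetric pair $(a_k, a_k^\alpha)$ in the two estimates of Theorem \ref{corEnt}, together with the sign flip at $\alpha=1$, is precisely what produces the $\max/\min$ dichotomy between the regimes $\alpha \in (0,1)$ and $\alpha \in (1,\infty)$.
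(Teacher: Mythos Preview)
Your argument is correct and somewhat cleaner than the paper's. Both proofs feed $\Q=\B_\delta$ into Theorem~\ref{corEnt} and then pass to the limit after dividing by $-\log_2\delta$, but the paper keeps the full expression $g_\alpha^{-1}\big[\sum_k a_k\, g_\alpha(H_k(\delta))\big]$ intact, replaces each $H_k(\delta)$ by the linear bound $-\log_2(\delta)(\ldim_\alpha(\mu_k)-\varepsilon_k)$ coming from the definition of $\ldim_\alpha$, and then computes the limit explicitly by factoring out the dominant power of $\delta$ inside the logarithm. You instead simplify \emph{before} passing to the limit: by dropping all but one summand (or bounding each summand by the extremal one) you reduce Theorem~\ref{corEnt} to the elementary pointwise comparisons $H(\delta)\geq H_k(\delta)+C_k$ and $H(\delta)\leq H_{\max}(\delta)+C$ (with the roles reversed for $\alpha>1$), after which the limit step is trivial and the identities $\limsup\max=\max\limsup$, $\liminf\min=\min\liminf$ finish the job. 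Your route avoids the $\varepsilon$--$\delta$ bookkeeping and the explicit asymptotic computation, and it makes more transparent why the $\max/\min$ dichotomy appears: it is exactly the monotonicity direction of $g_\alpha$ that decides which summand dominates. The paper's computation, on the other hand, shows concretely how the subdominant term disappears in the limit, which some readers may find more convincing than invoking the $\limsup$/$\liminf$ identities.
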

\begin{proof}
Let us show first inequality from formula (\ref{dwa}). Rest of them can be proven in similar manner.

Directly from the definition of R\'enyi entropy dimension of order $\alpha \in (0,\infty)\setminus\{1\}$, we have:
$$
\liminf_{\d \to 0} \frac{\Ha(\mu_k; \B_\d)}{-\log_2(\d)} = \ldim_\alpha(\mu_k) \text{, for } k = 1,2.
$$
Then for arbitrary $\e_1, \e_2 > 0$, there exists $\d_1, \d_2 > 0$, such that:
$$
\frac{\Ha(\mu_k; \B_{\d_k})}{-\log_2(\d_k)} \geq \ldim_\alpha(\mu_k) - \e_k
$$
and consequently
$$
\Ha(\mu_k; \B_{\d_k}) \geq -\log_2(\d_k)(\ldim_\alpha(\mu_k) - \e_k),
$$
for $k=1,2$. 

We put $\d:=\min\{\d_1,\d_2\}$. Making use of Observation \ref{obserwacja}: \ref{asc}, we get:
$$
g_\alpha^{-1}\big\{a_1 g_\alpha\big[\Ha(\mu_1; \B_\d)\big]+a_2 g_\alpha\big[\Ha(\mu_2; \B_\d)\big]\big\} 
$$
$$
\geq g_\alpha^{-1}\big\{a_1 g_\alpha\big[-\log_2(\d)(\ldim_\alpha(\mu_1) - \e_1)\big]+a_2  g_\alpha\big[-\log_2(\d)(\ldim_\alpha(\mu_2) - \e_2)\big]\big\}.
$$
By Theorem \ref{corEnt}, we have:
$$
\Ha(a_1\mu_1+a_2\mu_2;\B_\d) \geq
g_\alpha^{-1}\big\{a_1 g_\alpha\big[\Ha(\mu_1; \B_\d)\big]+a_2 g_\alpha\big[\Ha(\mu_2; \B_\d)\big]\big\} 
$$
$$
\geq g_\alpha^{-1}\big\{a_1 g_\alpha\big[-\log_2(\d)(\ldim_\alpha(\mu_1) - \e_1)\big]+a_2 g_\alpha\big[-\log_2(\d)(\ldim_\alpha(\mu_2) - \e_2)\big]\big\}.
$$
Dividing the above inequality by $(-\log_2(\d))$ and taking the limit as $\d \to 0$, we conclude:
$$
\liminf_{\d \to 0}\frac{\Ha(a_1\mu_1 + a_2\mu_2;\B_\d)}{-\log_2(\d)}
$$
$$
\geq \liminf_{\d \to 0} \frac{g_\alpha^{-1}\big\{a_1 g_\alpha\big[-\log_2(\d)(\ldim_\alpha(\mu_1) - \e_1)\big] + a_2 g_\alpha\big[-\log_2(\d)(\ldim_\alpha(\mu_2) - \e_2)\big]\big\}}{-\log_2(\d)}
$$
$$
=\liminf_{\d \to 0} \frac{\frac{1}{1-\alpha}\log_2\big[a_1 \d^{-(1-\alpha)(\ldim_\alpha(\mu_1) - \e_1)} + a_2 \d^{-(1-\alpha)(\ldim_\alpha(\mu_2) - \e_2)}\big]}{-\log_2(\d)}
$$
$$
=\liminf_{\d \to 0} \frac{\frac{1}{1-\alpha}\log_2\big\{\d^{-(1-\alpha)(\ldim_\alpha(\mu_1) - \e_1)}\big[a_1 + a_2 \d^{(1-\alpha)(\ldim_\alpha(\mu_1) - \ldim_\alpha(\mu_2) - \e_1 + \e_2)}\big]\big\}}{-\log_2(\d)}
$$
$$
=\ldim_\alpha(\mu_1) - \e_1.
$$
Since $\e_1, \e_2$ was the arbitrary numbers, then desired inequality holds.
\end{proof}
Clearly, the above theorem can be generalised for any finite number of measures. In the case when all measures have R\'enyi entropy dimension of order $\alpha$ then the entropy dimension of the convex combination of measures is determined precisely.
\begin{cor}
Let $a_k \in (0,1)$ for $k=1,\ldots,n$ be such that $\sum\limits_{k=1}^n a_k = 1$ where $n \in \N$ and let $\{\mu_k\}_{k=1}^n \subset \M1$. If every $\mu_k$ has finite R\'enyi entropy dimension for $k \in \{1,\ldots,n\}$ then $\sum_{k=1}^n \mu_k$ also have R\'enyi entropy dimension. Moreover,
$$
{\dim}_\alpha(\sum_{k=1}^n a_k \mu_k) = \left\{
   \begin{array}{ll}
		\max\limits_{k=1,\ldots,n}\dim(\mu_k) & \mbox{for } \alpha \in (0,1),\\ 
		\min\limits_{k=1,\ldots,n}\dim(\mu_k) & \mbox{for } \alpha \in (1,\infty).
	\end{array}
\right.
$$
\end{cor}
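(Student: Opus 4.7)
The plan is to first observe that Theorem \ref{corDim} is stated only for two measures, but its proof extends verbatim to an arbitrary finite number of summands, because the key input—Theorem \ref{corEnt}—is already formulated for general $n$. Once the $n$-measure analogue of Theorem \ref{corDim} is in hand, the corollary follows by a simple sandwich argument using the hypothesis that every $\dim_\alpha(\mu_k)$ exists.

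First I would prove the $n$-measure extension. For arbitrary $\e_k>0$, choose $\d_k>0$ witnessing
$$
\Ha(\mu_k;\B_{\d_k}) \geq -\log_2(\d_k)\,(\ldim_\alpha(\mu_k)-\e_k),
$$
set $\d:=\min_k \d_k$, and apply Theorem \ref{corEnt} in its $n$-measure form (inequality (\ref{nierowWaz})). One then divides by $-\log_2(\d)$ and uses Observation \ref{obserwacja}: \ref{asc} or \ref{des} together with the identity $g_\alpha[-\log_2(\d)\,d]=\d^{-(1-\alpha)d}$. The essential asymptotic fact is that
$$
\lim_{\d\to 0}\frac{\frac{1}{1-\alpha}\log_2\bigl[\sum_{k=1}^n a_k\,\d^{-(1-\alpha)(d_k-\e_k)}\bigr]}{-\log_2(\d)}
=\begin{cases}\max_k(d_k-\e_k) & \alpha\in(0,1),\\ \min_k(d_k-\e_k) & \alpha\in(1,\infty),\end{cases}
$$
obtained by factoring out the dominant $\d$-power (the largest exponent when $\alpha<1$, the smallest when $\alpha>1$) and noting that the remaining bracket converges to a positive constant whose logarithm is absorbed by the denominator $-\log_2(\d)$. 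Letting each $\e_k\to 0$ yields the $n$-measure version of (\ref{dwa}). The upper-dimension bound (\ref{jeden}) is derived identically, starting from (\ref{nierowWaz2}); the prefactors $a_k^\alpha$ instead of $a_k$ do not change which term dominates, so the same $\max/\min$ dichotomy appears.

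With the extended Theorem \ref{corDim} available, the corollary follows immediately. By hypothesis $\udim_\alpha(\mu_k)=\ldim_\alpha(\mu_k)=\dim_\alpha(\mu_k)$ for every $k$. For $\alpha\in(0,1)$ the extended bounds read
$$
\max_k \dim_\alpha(\mu_k) \leq \ldim_\alpha\Bigl(\sum_{k=1}^n a_k\mu_k\Bigr) \leq \udim_\alpha\Bigl(\sum_{k=1}^n a_k\mu_k\Bigr) \leq \max_k \dim_\alpha(\mu_k),
$$
forcing all four quantities to coincide; for $\alpha\in(1,\infty)$ the same sandwich holds with $\min$ in place of $\max$. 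This simultaneously proves that the mixture has a R\'enyi entropy dimension and identifies its value.

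The only non-routine step is the asymptotic evaluation of the log-sum above, and this is essentially unchanged from the two-measure case carried out in the proof of Theorem \ref{corDim}. The mild obstacle is bookkeeping the sign of $(1-\alpha)$ correctly so that the right extremum ($\max$ for $\alpha<1$, $\min$ for $\alpha>1$) is selected both in the lower and upper bounds; this is precisely what makes the dichotomy in the statement inevitable.
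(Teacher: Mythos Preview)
Your proposal is correct and follows exactly the approach the paper intends: the paper does not give a separate proof of the corollary but simply remarks that Theorem~\ref{corDim} ``can be generalised for any finite number of measures,'' after which the corollary follows by the sandwich argument you describe. Your write-up fills in precisely those details, and the asymptotic evaluation via factoring out the dominant $\delta$-power is the same computation carried out in the proof of Theorem~\ref{corDim}.
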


\bibliographystyle{plain}
\bibliography{entropy}

\end{document}